\newcommand{\Hi}{\mathcal{H}}
\newcommand{\dell}{\partial}
\newcommand{\N}{\mathbb{N}}
\newcommand{\norm}[1]{\left| #1\right|}
\newtheorem{prop}{Proposition}
\begin{document}

\title[Information geometric approach to mixed state quantum estimation]{Information geometric approach to mixed state quantum estimation}

\author{Gabriel F. Magno$^1$, Carlos H. Grossi$^2$, Gerardo Adesso$^3$ and Diogo O. Soares-Pinto$^1$}
\address{$^1$ Instituto de F\'{i}sica de S\~{a}o Carlos, Universidade de S\~{a}o Paulo, CP 369, 13560-970, S\~{a}o Carlos, SP, Brazil}
\address{$^2$ Departamento de Matem\'{a}tica, ICMC, Universidade de S\~{a}o Paulo, Caixa Postal 668, 13560-970,
S\~{a}o Carlos, SP, Brazil}
\address{$^3$ School of Mathematical Sciences and Centre for the Mathematics and Theoretical Physics of Quantum Non-Equilibrium Systems, University of Nottingham, University Park, Nottingham NG7 2RD, United Kingdom} \ead{gabriel.magno@usp.br, grossi@icmc.usp.br, gerardo.adesso@nottingham.ac.uk, dosp@usp.br}

\vspace{10pt}
\begin{indented}
\item[]
\end{indented}

\begin{abstract}
Information geometry promotes an investigation of the geometric structure of statistical manifolds, providing a series of elucidations in various areas of scientific knowledge. In the physical sciences, especially in quantum theory, this geometric method has an incredible parallel with the distinguishability of states, an ability of great value for determining the effectiveness in implementing physical processes. This gives us the context for this work. Here we will approach a problem of uniparametric statistical inference from an information-geometric perspective. We will obtain the generalized Bhattacharyya higher-order corrections for the Cram\'{e}r-Rao bound, where the statistics is given by a mixed quantum state. Using an unbiased estimator $T$, canonically conjugated to the Hamiltonian $H$ that generates the dynamics, we find these corrections independent of the specific choice of estimator. This procedure is performed using information-geometric techniques, establishing connections with corrections to the pure states case.
\end{abstract}

%
% Uncomment for keywords
%\vspace{2pc}
%\noindent{\it Keywords}: XXXXXX, YYYYYYYY, ZZZZZZZZZ
%
% Uncomment for Submitted to journal title message
%\submitto{\JPA}
%
% Uncomment if a separate title page is required
%\maketitle
%
% For two-column output uncomment the next line and choose [10pt] rather than [12pt] in the \documentclass declaration
%\ioptwocol
%

\section{Introduction}
\label{sec:introducao}

Quantum technologies utilise quantum states as the fundamental agents responsible for information processing. Knowing the quantum operations that can act on these states, their proper control allows an optimization in coding/decoding, manipulation and transmission of the information content on the state of the system. The access to this encoded information, after the operation is applied, demands some way to distinguish the initial and final states of the system \cite{ikemike}. Therefore, distinguishing states in quantum models is a typical task at the core of information theory. Several measures of distance in state spaces are known in the literature that are used as quantifiers of distinguishability, for example, the trace distance, Bures distance, Hellinger distance, Hilbert-Schmidt distance, relative entropies, among others \cite{fuchs96, watrous2018}. As the concept of distance is closely linked to some type of geometric structure of the space involved, there is a natural connection between methods of geometry and information theory.

In this  geometric formulation of information theory, the quantum state space -- a set of density operators that act on the Hilbert space of a quantum system -- is treated as a differentiable manifold equipped with metric tensors that will define the notion of distance between the elements of the manifold. Such an approach allows the development of interesting physical results exploring the metric properties of space \cite{braunstein-caves, wootters81, amari1993infogeo, anderssonheydari1, anderssonheydari2, heydari}. In this way, it is natural to interpret the distinguishability measures as metrics defined in quantum statistical manifolds. This culminates in {\em information geometry}, an area that applies differential geometry methods to the solution and formalization of information science problems, obtaining robust and elegant results with broad applicability \cite{petz1996, gibiliscoisola, petzhasegawa, petz2002, hiaipetz, anderssonheydari3, jarzyna, benyosborne}. Such a framework was used successfully in problems of statistical inference, and it is specifically the one-parameter instance of these problems that will be the focus of our discussion \cite{brody1996prl, brody1996royalsoc}.

The uniparametric quantum statistical inference problem can be described as follows \cite{Helstrom, paris-ijqi, GLM}: Consider a physical system characterized by the pure quantum state $\rho(t)$, that depends on an unknown parameter $t$. We want to estimate the value of this parameter using an unbiased estimator $T$, i.e., $\mbox{E}_ {\rho} [T] = t$. Thus, we can ask: How accurate is this estimation procedure? To answer this question, note that the variance of the estimator can be interpreted as the error associated with the estimation \cite{Helstrom, paris-ijqi, GLM}. Therefore, a good estimation should have a small error, which means that the variance of the estimator should be as small as possible. In literature it is established a lower bound for the variance of the estimator, named the Cram\'{e}r-Rao inequality \cite{Helstrom, paris-ijqi, GLM},
\begin{equation}
\label{desigualdade_cr}
\mbox{Var}_{\rho}[T]\ge \frac{1}{\mathcal{G}},
\end{equation}
where $\mbox{Var}_{\rho}[T]\equiv\Delta T^2$ is the variance of the estimator $T$ and $\mathcal{G}$ is the quantum Fisher information associated to the parameter of interest. From this relation we can see that the more information the state provides about the parameter, the smaller is $\mbox{Var}_ {\rho}[T]$ and the better the estimation \cite{Helstrom, paris-ijqi, GLM, fabricio}.

In this work, we extend the analysis of the Cramér-Rao bound by obtaining its higher-order corrections using the geometry of the quantum state space for {\it mixed} states instead of {\it pure} states. We obtain the following generalized form of the bound in the mixed quantum state scenario:
\begin{prop}
Let $\rho(t)$ be a one-parameter mixed state under a von Neumann dynamics generated by the Hamiltonian $H$, canonically conjugated to the unbiased estimator $T$ of the parameter $t$. The generalized Cram\'{e}r-Rao bound is then given by
\begin{equation}
\label{qntcrlb_classys_correcao_misto}
(\Delta T^2 + \delta T^2)(\Delta H^2 - \delta H^2) \geq \frac{1}{4}\left[1+ \frac{(\mu_4-3\mu_2^2)^2}{\mu_6\,\mu_2-\mu_4^2} \right],
\end{equation}
where $\delta X = \tr(X\sqrt{\rho}X\sqrt{\rho})-[\tr(X\rho)]^2$, and $\mu_{2n}$ are the norms of the $n-$th derivatives of the square root of the state with respect to the parameter.
\end{prop}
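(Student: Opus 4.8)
\emph{Proof proposal.} The plan is to work in the Hilbert--Schmidt space of operators with the real inner product $\langle A,B\rangle=\tr(A^\dagger B)$ and to represent the state by its square root $\psi=\sqrt{\rho}$, which is a unit vector since $\tr\rho=\|\psi\|^2=1$. Under the von Neumann flow $\rho(t)=e^{-iHt}\rho(0)e^{iHt}$ one has $\psi(t)=e^{-iHt}\psi(0)e^{iHt}$, so every derivative is generated by $H$, namely $\psi^{(n)}:=\partial_t^n\psi=(-i)^n\,\mathrm{ad}_H^n\,\psi$, each $\psi^{(n)}$ being Hermitian. Because the unitary conjugation is an isometry that commutes with $\mathrm{ad}_H$, the quantities $\mu_{2n}=\|\psi^{(n)}\|^2=\tr\big((\psi^{(n)})^2\big)$ are \emph{independent of $t$}. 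First I would record the two algebraic identities that recast the statement: a short computation with $\psi^{(1)}=-i[H,\psi]$ gives $\mu_2=2\big(\Delta H^2-\delta H^2\big)$ (twice the Wigner--Yanase skew information), while writing $\Theta:=\tfrac12\{T-t,\psi\}$ yields $\Delta T^2+\delta T^2=2\|\Theta\|^2$. Hence the claim is equivalent to the single inequality $\mu_2\,\|\Theta\|^2\ge\tfrac14\big[1+(\mu_4-3\mu_2^2)^2/(\mu_6\mu_2-\mu_4^2)\big]$.

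The core is then a quantum Bhattacharyya argument: I would bound $\|\Theta\|^2$ from below by the squared length of the projection of $\Theta$ onto the span of the first few velocity vectors $\{\psi^{(1)},\psi^{(2)},\psi^{(3)}\}$ (Bessel's inequality). The overlaps $c_k:=\langle\Theta,\psi^{(k)}\rangle$ are obtained by differentiating the unbiasedness constraint $\tr(\rho T)=t$: since $\{\psi,\psi^{(k)}\}=\partial_t^k\rho-\sum_{a=1}^{k-1}\binom{k}{a}\psi^{(a)}\psi^{(k-a)}$ and $\partial_t^k\tr(\rho T)=\delta_{k1}$, one gets $c_1=\tfrac12$ immediately, $c_3=-\tfrac32\,\tr\!\big(T\{\psi^{(1)},\psi^{(2)}\}\big)$, and a $c_2$ built from $\tr\!\big(T(\psi^{(1)})^2\big)$. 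In parallel I would assemble the Gram matrix of $\{\psi^{(1)},\psi^{(2)},\psi^{(3)}\}$; using that the $\mu_{2n}$ are constant (so $\langle\psi^{(1)},\psi^{(2)}\rangle=\tfrac12\partial_t\mu_2=0$ and $\langle\psi^{(1)},\psi^{(3)}\rangle=-\langle\psi^{(2)},\psi^{(2)}\rangle=-\mu_4$) it has diagonal $(\mu_2,\mu_4,\mu_6)$ and its only nonvanishing off-diagonal entry is $-\mu_4$ coupling $\psi^{(1)}$ and $\psi^{(3)}$.

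The canonical conjugacy of $T$ and $H$ enters precisely to fix $c_3$ in an estimator-independent way. Writing $\tr\!\big(T(\psi^{(1)}(t))^2\big)=\tr\!\big(e^{iHt}Te^{-iHt}(\psi_0^{(1)})^2\big)$ and using $e^{iHt}Te^{-iHt}=T+t$ (equivalently $i[H,T]=1$), the $t$-derivative is exactly $\mu_2$, so $c_3=-\tfrac32\mu_2$, while $c_2$ remains estimator-dependent but contributes only the nonnegative term $c_2^2/\mu_4$ to the projection and may be discarded. Gram--Schmidt orthogonalisation in the two-dimensional $\{\psi^{(1)},\psi^{(3)}\}$ block produces the combination $c_3+\tfrac{\mu_4}{2\mu_2}=(\mu_4-3\mu_2^2)/(2\mu_2)$ and the normalisation $\mu_6-\mu_4^2/\mu_2=(\mu_6\mu_2-\mu_4^2)/\mu_2$, whence $\|\Theta\|^2\ge \tfrac{1}{4\mu_2}+(\mu_4-3\mu_2^2)^2/\big(4\mu_2(\mu_6\mu_2-\mu_4^2)\big)$; multiplying by $\mu_2$ and restoring $\Delta T^2+\delta T^2$ and $\Delta H^2-\delta H^2$ gives the stated bound, and the leading term alone reproduces the ordinary (skew-information) Cram\'er--Rao inequality.

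The main obstacle I anticipate is exactly the treatment of the higher overlaps: showing that $c_3$ is universal requires the Heisenberg-shift consequence of canonical conjugacy, and one must argue carefully that the residual, estimator-dependent piece $c_2$ only strengthens the inequality so that dropping it leaves a valid, estimator-independent bound. A secondary point worth isolating is the identification of the numerator $\mu_4-3\mu_2^2$ as a fourth-cumulant-type object: its vanishing (the ``Gaussian'' case) collapses the correction and returns the uncorrected bound, which is the natural consistency check and the mixed-state echo of the known pure-state corrections.
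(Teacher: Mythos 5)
Your proposal is correct and follows essentially the same route as the paper: a Bhattacharyya/Bessel projection of the gradient (your $\Theta$, the paper's $\tfrac{1}{2}\triangledown t$) onto the Gram--Schmidt-orthogonalised derivatives of $\sqrt{\rho}$, discarding the estimator-dependent second-order term and evaluating the first- and third-order overlaps via canonical conjugation --- your Heisenberg-shift computation of $c_3=-\tfrac{3}{2}\mu_2$ is the paper's Proposition~5 (for $n=3$) in disguise. The only divergence is from a typo in the paper, which writes $\hat{\xi}^{(3)a}\triangledown_a t=\mu_4^2/\mu_2-3\mu_2$; your value $\mu_4/\mu_2-3\mu_2$ is the one consistent with the stated final bound.
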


%After proving this proposition we impose the condition $\rho=\rho^2$ that constitute the set of pure-states density matrices that are extreme points of the convex set $\mathcal{S}$ \textcolor{red}{ tem problema usar essa notação aqui? Já que ela só vai ser definida na próxima seção} and live in the contour of $\mathcal{S}$.We discuss all the consequences of these corrections and the attempt to recover the result for the pure-state scenario found in the literature. Also an algorithm to find higher-order corrections is analyzed.

%%
%After proving this proposition we impose the condition $\rho=\rho^2$ in order to recover the result for the pure-state scenario found in the literature. Also an algorithm to find higher-order corrections is analyzed.
%that characterizes the set of pure-state density matrices which constitutes the boundary of the convex set $\bar{\mathcal{S}}=\{\rho\mid\rho=\rho^\dagger\ge 0, \tr\rho=1\}$ of all density operators. We will do that in attempt to recover the result for the pure-state scenario found in the literature. Also an algorithm to find higher-order corrections is analyzed.

The paper is organized as follows. In Section II we introduce the notation used along the manuscript and briefly discuss the current literature on quantum statistical estimation. Section III is devoted to obtain the extension of quantum estimation theory when considering a mixed state scenario, analyzing the consequences for the Cramér-Rao bound. In Section IV we present the higher order corrections for the variance implied by this mixed state scenario. In Section V, we present an algorithmic approach to obtain all possible corrections to the bound, and in Section VI we present our conclusions and discussions.

\section{Notation}
\label{sec:adapt_notacao}

Throughout the paper we use the following notation. Let $\mathcal{B}_{HS}$ be a Hilbert space contained in the set of square matrices with complex entries, equipped with the Hilbert-Schmidt inner product. For a given matrix $A\in \mathcal{B}_{HS}$ we have
\begin{equation*}
\langle A,A\rangle_{HS}=\tr (A^\dagger A)<\infty,
\end{equation*}
where $\langle\bullet,\bullet\rangle_{HS}$ denotes the Hilbert-Schmidt product and $\tr(\bullet)$ stands for the matrix trace. Therefore, $\mathcal{B}_{HS}$ is the set of all matrices with finite Hilbert-Schmidt norm. Thus, let us index the Hilbert-Schmidt product as $g_{ab}$. This definition allows us to establish that $\zeta^ag_{ab}\zeta^b \equiv \tr (\zeta^\dagger\zeta)$, where $\zeta^a \in \mathcal{B}_{HS}$ and $\zeta$ is the matrix associated to $\zeta^a$.

On the other hand, the set of density operators representing mixed states in quantum theory, $\mathcal{S}=\{ \rho \mid \rho=\rho^\dagger > 0, \tr(\rho)= 1, \tr(\rho^{2}) < 1 \}$, denoting a space of positive definite density operators, is a differentiable manifold in $\mathcal{B}_{HS}$ \cite{amari1993infogeo, livro_geo_info}. As the operators in $\mathcal{S}$ are positive definite, we have the embedding $\rho \rightarrow \sqrt{\rho}$ which maps an operator into its own square-root. Thus, identifying $\sqrt{\rho}\equiv\xi$, we find that $g_{ab}\xi^a\xi^b=\tr (\xi\xi)=1$.

A random variable in $\mathcal{S}$ is a form $X_{ab}$ whose average in terms of the state $\xi$ is given by
\begin{equation*}
%\label{media_misto}
\mbox{E}_{\xi}[X]=\xi^aX_{ab}\xi^b=\tr [\xi X\xi].
\end{equation*}
Also, the variance of $X_{ab}$ is given by
\begin{equation*}
%\label{var_misto}
\mbox{Var}_{\xi}[X]=\Delta X^2=\xi^a\tilde{X}_{ac}\tensor[ ]{\tilde{X}}{^c_b}\xi^b,
\end{equation*}
where $\tilde{X}_{ab} \equiv X_{ab}-g_{ab}(\xi^cX_{cd}\xi^d)$. It is important to note that when calculating these quantities, in general, we have $X_{ac}\xi^c\xi^b\tensor{Y}{_b^a}\neq\xi^cX_{ca}\xi^b\tensor{Y}{_b^a}$ since $\tr [X\xi\xi Y]\neq\tr [\xi X \xi Y]$. Once $\xi^a$ is a matrix, the way the contraction is taken is quite important, even for symmetric forms, because the matrix algebra is non-commutative.

Now consider that $\mathcal{S}$ is parametrically given by a set of local coordinates $\{ \theta = [\theta^{i}] \in \Re^{r}; i=1,\dots,r\}$, where $\xi(\theta)\in C^\infty$. Defining $\partial_i\equiv\partial/\partial\theta^i$ we have, in terms of local coordinates in $\mathcal{S}$, the Riemannian metric
\begin{equation}
\label{fisher_metrica_misto}
G_{ij}=2g_{ab}\dell_i\xi^a\dell_j\xi^b=2\tr [\dell_i\xi\dell_j\xi]
\end{equation}
induced by the Hilbert-Schmidt metric $g_{ab}$ of $\mathcal{B}_{HS}(\Hi)$. The proof of Eq.(\ref{fisher_metrica_misto}) follows the same steps as the proof of proposition 1 in Ref.~\cite{brody1996royalsoc}. The difference is that there, a factor 4 is arbitrarily considered so one has an identification with the results of the Cramér-Rao bound while here we have a factor 2. The motivation for this difference will become clear in the next section.

\section{Quantum statistical estimation}
\label{sec:est_qnt_misto}

In this section, inspired by Ref.\cite{brody2011fasttrack}, we are going to show a generalisation considering the mixed state scenario. Thus, consider that, in $\mathcal{S}$, we have the result of a measurement from an unbiased estimator $T_{ab}$ such that $\xi^aT_{ab}\xi^b=t$, and that a one-parameter family $\xi(t)$ characterizes the probability distribution of all the possible results of the measurement. We can refer to $t$ as the time that has passed since the preparation of the initially known normalized state $\xi_0 = \xi(0)$. Our goal is to estimate $t$.

The system is prepared in the normalized state $\xi_0$  and evolves under the Hamiltonian $H_{ab}$ following the equation of motion
\begin{equation*}
\xi_t = e^{-iHt}\,\xi_0\,e^{iHt}.
\end{equation*}
The derivative of this equation in time gives
\begin{equation*}
\dot{\xi}=-iH\xi+i\xi H = -i[H,\xi]
\end{equation*}
that is the von Neumann dynamics for the state $\xi_t$. From such evolution we find that
\begin{equation*}
%\label{traco_xipontoquad}
\tr (\dot{\xi}\dot{\xi})=2[\tr (H^2\xi\xi)-\tr (H\xi H\xi)].
\end{equation*}

Given the mapping $\xi\rightarrow\sqrt{\rho}$, we find that the right side of the previous expression is twice the Wigner-Yanase skew information (WYSI) \cite{WY} of an arbitrary observable $H$ and a quantum state $\rho$
\begin{equation}
\label{wysi}
I_{\rho}(H)=\tr (H^2\rho)-\tr (H\sqrt{\rho}H\sqrt{\rho})=-\frac{1}{2}\tr ([\sqrt{\rho},H]^2).
\end{equation}
Defining $\delta X^2 \equiv \tr (X \xi X \xi)-[\tr (X \xi \xi)]^2$ we can rewrite Eq.(\ref{wysi}) as
\begin{equation*}
I_{\rho}(H)=(\Delta H^2-\delta H^2).
\end{equation*}
From the embedding $\rho\rightarrow \sqrt{\rho}$, we find that the natural metric, induced by the notion of distance in the ambient manifold, is going to be the WYSI in the mixed state space which is the only metric that has a Riemannian $\alpha-$connection \cite{amari1993infogeo}.

%From the embedding $\rho\rightarrow \sqrt{\rho}$, we find that the natural metric, induced by the notion of distance in the ambient manifold, is going to be the WYSI in the mixed state space \textcolor{red}{which(?)} \textcolor{blue}{ sugestão:  it's the only.. } is the only metric that has Riemannian $\alpha-$connection \cite{amari1993infogeo} .

Relaxing the unitarity of the trace, we can define a symmetric function of $\xi$ in $\mathcal{B}_{HS}$ as
\begin{equation*}
%\label{t_funcao}
t(\xi)=\frac{\xi^aT_{ab}\xi^b}{\xi^cg_{cd}\xi^d}=\frac{\tr (\xi T\xi)}{\tr (\xi\xi)}.
\end{equation*}
Again, being the matrix algebra non-commutative, we must be careful when doing the contractions. For example, considering $\triangledown_c=\dell/\dell\xi^c$, we obtain
\begin{eqnarray*}
\triangledown_c\xi^aT_{ab}\xi^b &=& \triangledown_c\xi^aT_{ab}\xi^b + \xi^aT_{ab}\triangledown_c\xi^b \nonumber \\
&=& T_{cb}\xi^b + \xi^aT_{ac}  \\
&=& (T_{ac}+T_{ca})\xi^a=(T\xi+\xi T).
\end{eqnarray*}
Following this calculation, we can find that the gradient of a function $t$ in $\mathcal{B}_{HS}$ is given by
\begin{equation*}
\triangledown_ct=\frac{(T_{ac}+T_{ca})\xi^a-2\,t\,g_{ac}\xi^a}{\xi^b\xi_b}.
\end{equation*}
Now, imposing the unitarity of the trace, $\tr(\xi\xi)=1$, we get
\begin{eqnarray*}
%\label{gradt_misto}
\triangledown_ct &=& (\tilde{T}_{ac}+\tilde{T}_{ca})\xi^a \\
\triangledown^ct &=& (\tensor{T}{^c_a}+\tensor{T}{_a^c})\xi^a -2t\xi^c.
\end{eqnarray*}
The gradient norm is
\begin{eqnarray*}
\label{norm_gradt_misto}
\norm{\triangledown^ct}^2&=&(T_{ac}+T_{ca})(\tensor{T}{^c_b}+\tensor{T}{_b^c})\xi^a\xi^b  \\
&-&2\,t\,(T_{ac}+T_{ca})\xi^a\xi^c -2\,t\,(T_{ab}+T_{ba})\xi^a\xi^b + 4\,t^2\,\xi^a\xi_a  \\
&=&T_{ca}\xi^a\tensor{T}{^c_b}\xi^b+T_{ca}\xi^a\xi^b\tensor{T}{_b^c}  \\
&+&\xi^aT_{ac}\tensor{T}{^c_b}\xi^b + \xi^aT_{ac}\xi^b\tensor{T}{_b^c}-4t^2  \\
&=&2\{ [\tr (T^2\xi\xi)-t^2] + [\tr (T\xi T\xi)-t^2] \}  \\
&=&2(\Delta T^2 + \delta T^2)
\end{eqnarray*}
The von Neumann dynamics can be written as
\begin{equation*}
\dot{\xi}^a=-i\tensor{H}{^a_b}\xi^b+i\xi^b\tensor{H}{_b^a}.
\end{equation*}
Being $T$ canonically conjugated to $H$, i.e., $i[H,T]=1$, we obtain the projection of $\triangledown^ct$ into the direction $\dot{\xi}^a$
\begin{eqnarray*}
%\label{proj_gradt_xiponto}
\triangledown_at\,\dot{\xi}^a&=&[(T_{ac}+T_{ca})\xi^c-2\,t\,\xi_a][-i\tensor{H}{^a_b}\xi^b+i\xi^b\tensor{H}{_b^a}] \nonumber \\
&=&\tr (-iT\xi H\xi+i\xi T\xi H+iT\xi\xi H-i\xi TH\xi) \nonumber \\
&=&i\tr ([H,T]\xi\xi)=1.
\end{eqnarray*}

Using the Cauchy-Schwartz inequality for a pair of hermitian operators $X$ and $Y$
\begin{equation*}
%\label{CS_operadores}
[\tr (XY)]^2\geq\tr(X^2)\tr(Y^2),
\end{equation*}
we deduce the quantum Cramér-Rao inequality for the density operators space
\begin{equation}
\label{qntcrlb_misto2}
(\Delta T^2+\delta T^2)(\Delta H^2-\delta H^2)\geq\frac{1}{4},
\end{equation}
where $\Delta X$ are the variances of the parameters $X$ and $(\Delta X^2 + \delta X^2)$ is the skew information of second kind \cite{brody2011fasttrack}. Note that the relation above extends the usual notion of Cram\'{e}r-Rao bound, considering some corrections to the uncertainty relation.

%\textcolor{red}{com lado direito da desigualdade sendo a cota inferior no caso misto aplicada ao produto à esquerda que envolve as variâncias de $T$ e $H$.(O que voc\^{e} quis dizer com essa frase?)}\textcolor{blue}{ é uma desiguald. do tipo CR, pq basicamente aplicação da relação de cauchy schwarz, só que não aparece um produto de variâncias limpo, tem aqueles deltinhas, então algum fisicão poderia chiar. Se achar melhor tira esta frase} $\delta X^{2}$ como definido abaixo da eq.(3) chama skew information of second kind \cite{brody2011fasttrack}.

If we impose that $\xi=\xi^2$ to recover the pure state case, it follows that $\tr (H\sqrt{\rho}H\sqrt{\rho}) = [\tr (H\rho)]^2$, consequently $\delta H^2 = 0$, and thus $\tr (\dot{\xi}\dot{\xi})=4\Delta H^2$. Similarly, we can find that $\delta T^2 = 0$. After these considerations, the inequality reduces to the Cramér-Rao bound for pure states found in literature \cite{brody1996prl, brody1996royalsoc}\footnote{In the quantum scenario, the metric when $\alpha=0$ corresponds exactly to the WYSI in the uniparametric case, where the factor 4 can be found \cite{amari1993infogeo}. However, the embedding considered in this case is $\rho\rightarrow 2\sqrt{\rho}$, which is different in the present work. The factor 2 allows us to establish an equality with the relation in Eq.(\ref{qntcrlb_misto2}), in the sense of the Cramér-Rao $1/\mathcal{G}$, and recover the expression for the pure state case. Notice that if we consider the embedding $\rho\rightarrow 2\sqrt{\rho}$, the metric will have the factor 4 and, when taking Eq.(\ref{qntcrlb_misto2}) in the pure state case, we would find $\Delta T^2\Delta H^2\ge 1/8$ \cite{ref20_brody_fasttrack}.}.

We can go a step further and rewrite the inequality given in Eq.(\ref{qntcrlb_misto2}), after a simple manipulation, in another form
\begin{equation*}
\Delta T^2\Delta H^2 \geq \frac{1}{4} + \delta T^2\delta H^2.
\end{equation*}
However, although more symmetric, this inequality is less tight then the previous one.

\section{Higher orders corrections for the variance bound in mixed state case}
\label{sec:ordem_sup_var_qnt}

The dynamics of a mixed states will not lead to an exponential family of states that saturate the Cauchy-Schwartz inequality, i.e., Eq.(\ref{qntcrlb_misto2}) has its minimum bound unattainable when states evolve under a von Neumann dynamics. The calculation of higher-order corrections then becomes relevant to find how the estimation can be affected in such circumstances.

In order to establish the higher-order corrections to the bound in the mixed state scenario, we will be inspired by the works of Bhattacharyya \cite{bhatt1, bhatt2, bhatt3}, we will follow the programme outlined in Ref.\cite{brody2011fasttrack} and finding in this new context what is called generalized Bhattacharyya bound \cite{brody1996prl, brody1996royalsoc}.

\begin{prop}\label{prop:bhattacharrya_misto} Let
\begin{equation*}
%\label{hat_xi(n)}
\hat{\xi}^{(n)a}=\xi^{(n)a}-\frac{\xi^{(n-1)b}\xi^{(n)}_b}{\xi^{(n-1)c}\xi^{(n-1)}_c}\,\xi^{(n-1)a} - \dots - (\xi^b\xi^{(n)}_b)\,\xi^a \quad (n=0,1,2,\dots),
\end{equation*}
be an orthogonal vectors system, where $\xi^{(n)a}=d^n\xi^a/dt^n$ and
\begin{eqnarray*}
%\label{tab_xihat}
\hat{\xi}^{(r)}_a\xi^a&=&0 \nonumber \\
\hat{\xi}^{(r)}_a\xi^{(s)a} &=& 0; r\neq s \nonumber \\
\hat{\xi}^{(r)}_a\hat{\xi}^{(s)a} &=& 0; r\neq s,
\end{eqnarray*}
are defined for mixed states. The generalized Bhattacharyya lower bounds for an unbiased estimator $T_{ab}$ of a function $t$ can be expressed in the form
\begin{equation}
\label{bhattacharrya_misto}
\Delta T^2 + \delta T^2 \geq \frac{1}{2}\sum_n\frac{(\triangledown_at\,\hat{\xi}^{(n)a})^2}{\hat{\xi}^{(n)a}\hat{\xi}^{(n)}_a}
\end{equation}
\end{prop}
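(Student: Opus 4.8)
The plan is to recognize the claimed inequality as an instance of Bessel's inequality in the real Hilbert space of Hermitian matrices equipped with the Hilbert-Schmidt inner product $g_{ab}$, applied to the gradient $\triangledown^a t$ and the orthogonal system $\{\hat{\xi}^{(n)a}\}$. The essential analytic input is already in hand: the gradient-norm computation of Section~\ref{sec:est_qnt_misto} gives $\triangledown_a t\,\triangledown^a t = 2(\Delta T^2 + \delta T^2)$, so the left-hand side of Eq.(\ref{bhattacharrya_misto}) is, up to the factor $1/2$, nothing but $\|\triangledown t\|^2$. The whole proof is then to bound this squared norm from below by the sum of the squares of its orthogonal projections.

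First I would confirm that the vectors $\hat{\xi}^{(n)a}$ constitute a genuine orthogonal family, which is the content of the three relations listed in the hypothesis. The recursive formula for $\hat{\xi}^{(n)a}$ is precisely the Gram-Schmidt orthogonalization of the ordered family $\{\xi^a, \xi^{(1)a}, \xi^{(2)a}, \dots\}$, so $\hat{\xi}^{(r)}_a\hat{\xi}^{(s)a}=0$ for $r\neq s$ follows by induction on the construction. Here I would check that each $\xi^{(n)}$ (a time-derivative of $\sqrt{\rho}$) is Hermitian, so that the Hilbert-Schmidt pairing restricted to their span is a genuine \emph{real} inner product, and that the derivatives are linearly independent, so that the Gram-Schmidt denominators $\hat{\xi}^{(n)a}\hat{\xi}^{(n)}_a$ do not vanish.

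Next I would invoke Bessel's inequality directly. Writing the orthogonal projection of the gradient onto $\mathrm{span}\{\hat{\xi}^{(n)a}\}$ as $P\triangledown^a t = \sum_n \frac{\triangledown_b t\,\hat{\xi}^{(n)b}}{\hat{\xi}^{(n)c}\hat{\xi}^{(n)}_c}\,\hat{\xi}^{(n)a}$, the mutual orthogonality of the $\hat{\xi}^{(n)}$ collapses the double sum and gives $\|P\triangledown t\|^2 = \sum_n \frac{(\triangledown_a t\,\hat{\xi}^{(n)a})^2}{\hat{\xi}^{(n)a}\hat{\xi}^{(n)}_a}$, while the Pythagorean decomposition $\|\triangledown t\|^2 = \|P\triangledown t\|^2 + \|\triangledown t - P\triangledown t\|^2$ yields $\|\triangledown t\|^2 \geq \|P\triangledown t\|^2$. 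Substituting $\|\triangledown t\|^2 = 2(\Delta T^2 + \delta T^2)$ and dividing by two produces Eq.(\ref{bhattacharrya_misto}).

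I expect the main obstacle to be the bookkeeping forced by the non-commutativity of the matrix algebra, which Section~\ref{sec:adapt_notacao} already flags: the order of contraction matters, so one must verify that the projections $\triangledown_a t\,\hat{\xi}^{(n)a}$ and the norms $\hat{\xi}^{(n)a}\hat{\xi}^{(n)}_a$ really are the symmetric Hilbert-Schmidt pairings to which Bessel's inequality applies, rather than one of the asymmetric contractions. As a consistency check I would isolate the low-order terms: the $n=0$ term vanishes since $\hat{\xi}^{(0)a}=\xi^a$ and $\triangledown_a t\,\xi^a=0$, whereas the $n=1$ term reproduces $\triangledown_a t\,\dot{\xi}^a=1$ together with $\hat{\xi}^{(1)a}\hat{\xi}^{(1)}_a=\tr(\dot{\xi}\dot{\xi})=2(\Delta H^2-\delta H^2)$, so that retaining only $n=1$ already recovers Eq.(\ref{qntcrlb_misto2}). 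This confirms that the terms with $n\geq 2$ genuinely tighten the bound, which is the purpose of the Bhattacharyya correction.
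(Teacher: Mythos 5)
Your proof is correct and is in substance the same as the paper's: the paper introduces the modified estimator $R_{ab}=\hat{T}_{ab}+\sum_n\lambda_n\xi_{(a)}\hat{\xi}_{(b)}$ and minimizes $\mbox{Var}_{\xi}[R]$ over the $\lambda_n$, which is precisely the least-squares computation of the orthogonal projection of $\triangledown^a t$ onto $\mathrm{span}\{\hat{\xi}^{(n)a}\}$, so its final step $\mbox{Var}_{\xi}[R]\geq 0$ is your Pythagorean/Bessel inequality, and the two identifications you rely on ($\mbox{Var}_{\xi}[\hat{T}]=\triangledown_at\,\triangledown^at=2(\Delta T^2+\delta T^2)$ and the symmetrized pairing with $\hat{\xi}^{(n)a}$) appear verbatim in the paper's argument. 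Indeed the paper states your reading explicitly as ``a simple interpretation'' immediately after its proof, so your route only repackages the same decomposition in directly geometric language.
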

\begin{proof}[Proof]
Let $\hat{T}_{ab}\equiv (T_{ab}+T_{ba})-2\,t\,g_{ab}=(\tilde{T}_{ab}+\tilde{T}_{ba})$. Defining the tensor $R_{ab}\equiv \hat{T}_{ab}+\sum_n\lambda_n\xi_{(a)}\hat{\xi}_{(b)}$, we obtain the variance for $R$
\begin{equation*}
%\label{var_r_misto}
\mbox{Var}_{\xi}[R] = \mbox{Var}_{\xi}[\hat{T}]+\sum_n\lambda_n(\xi^a\hat{T}_{ac}\hat{\xi}^{(n)c}+\hat{\xi}^{(n)c}\hat{T}_{ca}\xi^a) + \sum_n\lambda_r^2\hat{\xi}^{(n)b}\hat{\xi}^{(n)}_b.
\end{equation*}
To obtain each value $\lambda_n$ that minimizes $\mbox{Var}_{\xi}[R]$, we need to consider the variance as a function of $\{\lambda_n\}_{n\in\N}$ variables and find the extreme points that cause the gradient of this function to vanish. There is only one extreme point, a minimum that presents all entries with the same value, namely $\lambda_n^{min}=-(\xi^a\hat{T}_{ac}\hat{\xi}^{(n)c}+\hat{\xi}^{(n)c}\hat{T}_{ca}\xi^a)/2\hat{\xi}^{(n)}_b\hat{\xi}^{(n)b}$ for all $n$. Replacing this in the above expression, we find
\begin{equation*}
%\label{min_varr_misto}
\min_{\{\lambda_n\}_{n\in\N}}\left(\mbox{Var}_{\xi}[R]\right) = \mbox{Var}_{\xi}[\hat{T}]-\sum_n\frac{(\xi^a\hat{T}_{ac}\hat{\xi}^{(n)c}+\hat{\xi}^{(n)c}\hat{T}_{ca}\xi^a)^2}{4\hat{\xi}^{(n)b}\hat{\xi}^{(n)}_b}.
\end{equation*}
Since $\mbox{Var}_{\xi}[R]\geq 0$, we obtain an expression for the generalized bound on the variance of $\hat{T}_{ab}$,
\begin{equation*}
%\label{cota_general_t_hat}
\mbox{Var}_{\xi}[\hat{T}]\geq \sum_n\frac{(\xi^a\hat{T}_{ac}\hat{\xi}^{(n)c}+\hat{\xi}^{(n)c}\hat{T}_{ca}\xi^a)^2}{4\hat{\xi}^{(n)b}\hat{\xi}^{(n)}_b}.
\end{equation*}
We need to obtain
\begin{equation*}
%\label{gradt_xihat}
(\xi^a\hat{T}_{ac}\hat{\xi}^{(n)c}+\hat{\xi}^{(n)c}\hat{T}_{ca}\xi^a)^2=4(\xi^aT_{ac}\hat{\xi}^{(n)c}+\hat{\xi}^{(n)c}T_{ca}\xi^a)^2=4(\triangledown_at\,\hat{\xi}^{(n)a})^2
\end{equation*}
and
\begin{equation*}
%\label{var_t_hat}
\mbox{Var}_{\xi}[\hat{T}] = 2(\Delta T^2+\delta T^2).
\end{equation*}
Combining all these results we arrive at the desired expression in Eq.(\ref{bhattacharrya_misto}). Naturally, for the case $r=1$, we recover the inequality given in Eq.(\ref{qntcrlb_misto2}).
\end{proof}

A simple interpretation of this proposition is that, given the gradient vector $\triangledown^at$, its squared modulus will always be greater than or equal to the sum of the squares of its orthogonal components with respect to a given base. Applying Cauchy-Schwarz inequality is the same as using order$-1$ Bhattacharyya inequality. Note that the generalized Bhattacharyya bound is not necessarily independent of the specific choice of the estimator $T_{ab}$. This is evidence that they are not fully equivalent to the original Bhattacharrya bounds, not even at the classical level, since the original bounds are independent of the specific choice of estimator. Therefore, we want to obtain corrections which are independent of that choice, so that the bound will not depend on the way we perform the estimation. This will again demand $T$ and $H$ to be canonically conjugated.

Before focusing on higher order corrections, let us present some useful results. These results are adaptations of statements from \cite{brody1996royalsoc} to the context of mixed states.

\begin{prop}
\label{lem:norm_xin_indep_t_misto}
Given $\xi^a(t)$ satisfying the von Neumann dynamics, the norm of $\xi^{(n)a}$, $g_{ab}\xi^{(n)a}\xi^{(n)b}$, is independent of the parameter $t$, where $\xi^{(n)a}=d^n\xi^a/dt^n$. In particular, $g_{ab}\xi^{(n)a}\xi^{(n+1)b}=0$.
\end{prop}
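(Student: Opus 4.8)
The plan is to exploit the fact that differentiating the von Neumann equation $\dot{\xi}=-i[H,\xi]$ does not alter its form, so that \emph{every} derivative $\xi^{(n)}$ obeys the same dynamics. Since the Hamiltonian $H$ is independent of $t$, differentiating the relation $\xi^{(n+1)}=-i[H,\xi^{(n)}]$ once more gives $\xi^{(n+2)}=-i[H,\xi^{(n+1)}]$; starting from the base case $n=0$ (the von Neumann equation itself), induction yields
\begin{equation*}
\xi^{(n+1)a}=-i\tensor{H}{^a_b}\xi^{(n)b}+i\xi^{(n)b}\tensor{H}{_b^a}, \qquad \xi^{(n+1)}=-i[H,\xi^{(n)}],
\end{equation*}
for all $n$. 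Equivalently, one checks directly that $\xi^{(n)}_t=e^{-iHt}\xi^{(n)}_0 e^{iHt}$, so each derivative is transported by the same unitary conjugation; I will work with the differential form, which is cleaner.

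With this structural fact the two claims reduce to a single differentiation. All the $\xi^{(n)}$ are Hermitian, since $\xi=\sqrt{\rho}$ is and the map $A\mapsto -i[H,A]$ preserves Hermiticity, so the squared norm may be written $g_{ab}\xi^{(n)a}\xi^{(n)b}=\tr(\xi^{(n)}\xi^{(n)})$ with the dagger dropped. Using the product rule together with cyclicity of the trace, I compute
\begin{equation*}
\frac{d}{dt}\,g_{ab}\xi^{(n)a}\xi^{(n)b}=2\,\tr(\xi^{(n)}\xi^{(n+1)})=-2i\,\tr\!\left(\xi^{(n)}[H,\xi^{(n)}]\right)=0,
\end{equation*}
the final equality holding because $\tr(\xi^{(n)}H\xi^{(n)})=\tr(\xi^{(n)}\xi^{(n)}H)$ by cyclicity, so the commutator contribution cancels. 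The vanishing of this derivative is exactly the statement that $g_{ab}\xi^{(n)a}\xi^{(n)b}$ is constant in $t$. Moreover, the first equality above reads $\frac{d}{dt}\tr(\xi^{(n)}\xi^{(n)})=2\,\tr(\xi^{(n)}\xi^{(n+1)})$, so the same computation forces $g_{ab}\xi^{(n)a}\xi^{(n+1)b}=\tr(\xi^{(n)}\xi^{(n+1)})=0$, which is the ``in particular'' orthogonality of consecutive derivatives.

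I do not anticipate a genuine obstacle; the only points requiring care are the non-commutativity subtleties already emphasised in the Notation section. The symbol $g_{ab}\xi^{(n)a}\xi^{(n+1)b}$ must be read as $\tr(\xi^{(n)}\xi^{(n+1)})$ with the matrices multiplied in the correct order, and the cancellation of the commutator term rests entirely on trace cyclicity rather than on any (false) commutation of $H$ with $\xi^{(n)}$. Once everything is cast in trace language, the proposition is essentially the conservation of the Hilbert-Schmidt norm under unitary conjugation, applied to each derivative in turn.
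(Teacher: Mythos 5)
Your proof is correct and follows essentially the same route as the paper: both establish that every derivative $\xi^{(n)}$ satisfies the same von Neumann equation and then kill $\tr(\xi^{(n)}[H,\xi^{(n)}])$ by cyclicity of the trace. Your version is slightly more careful in that you justify by induction the fact that each $\xi^{(n)}$ obeys the same dynamics (the paper simply asserts it) and you note the Hermiticity needed to drop the dagger in the Hilbert--Schmidt product.
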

\begin{proof}[Proof]
If von Neumann is valid, then $\dot{\xi}^{(n)a}=-i\tensor{H}{^a_b}\xi^{(n)b}+i\xi^{(n)b}\tensor{H}{_b^a}$. The time derivative of the norm of $\xi^{(n)a}$ gives
\begin{eqnarray*}
\frac{d}{dt}\left[g_{ab}\xi^{(n)a}\xi^{(n)b}\right]&=&2g_{ab}\xi^{(n)a}\dot{\xi}^{(n)b}=2g_{ab}\xi^{(n)a}\xi^{(n+1)b} \nonumber \\
&=&\xi^{(n)}_b(-i\tensor{H}{^a_c}\xi^{(n)c}+i\xi^{(n)c}\tensor{H}{_c^b}) \\
&=&\tr (-i\xi^{(n)}H\xi^{(n)}+i\xi^{(n)}H\xi^{(n)})=0,
\end{eqnarray*}
completing the proof.
\end{proof}

It is important to note that since the von Neumann dynamics is given by $\dot{\xi}=-i[\tilde{H},\xi]$, where $\tilde{H}_{ab}=H_{ab}-g_{ab}[\tr (H\xi\xi)]$, we can generalize this relation to
\begin{equation}
\label{vonneumann_nderiv}
\xi^{(n)}=\mbox{mod}_{-i}[n]\cdot\mbox{Ad}^n_{\tilde{H}}[\xi],
\end{equation}
where
\begin{eqnarray*}
\mbox{mod}_{-i}[n] \Longrightarrow && n=\bar{1}\rightarrow-i \\
&& n=\bar{2}\rightarrow -1 \\
&& n=\bar{3}\rightarrow i \\
&& n=\bar{4}\rightarrow +1
\end{eqnarray*}
and
$$\mbox{Ad}^n_{\tilde{H}}[\bullet]=[\dots[\tilde{H},[\tilde{H},[\tilde{H},[\tilde{H},\bullet]]]]\dots].$$
Let us also define the $n-$th derivative of the norm of $\xi$ as
\begin{equation*}
g_{ab}\xi^{(n)a}\xi^{(n)b}\equiv \mu_{2n}.
\end{equation*}

\begin{prop}
\label{prop:conseq_TconjugH_misto}
Let $T_{ab}$ be canonically conjugate to $H_{ab}$ and an unbiased estimator to the parameter $t$. Then
\begin{equation}
\label{conseq_TconjugH_misto}
T_{ab}\xi^{(n)a}\xi^{(n)b}=t\,g_{ab}\xi^{(n)a}\xi^{(n)b}+\kappa,
\end{equation}
where $\kappa$ is a constant. Therefore, for all $n$, $\tilde{T}_{ab}\xi^{(n)a}\xi^{(n)b}=\kappa$ is a constant of motion along the path $\xi(t)$, following a von Neumann dynamics.
\end{prop}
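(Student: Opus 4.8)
The plan is to prove Eq.(\ref{conseq_TconjugH_misto}) by exhibiting the contraction $Q_n(t)\equiv\tilde{T}_{ab}\xi^{(n)a}\xi^{(n)b}$ as a constant of motion, $dQ_n/dt=0$; the displayed identity is then just the unfolded definition of $\tilde{T}_{ab}$. First I would use the unbiasedness hypothesis $\xi^cT_{cd}\xi^d=t$ to write $\tilde{T}_{ab}=T_{ab}-t\,g_{ab}$, so that
$$Q_n=T_{ab}\xi^{(n)a}\xi^{(n)b}-t\,\mu_{2n},\qquad \mu_{2n}=g_{ab}\xi^{(n)a}\xi^{(n)b}.$$
The point to keep in sight here is that $\tilde{T}_{ab}$ carries an explicit $t$-dependence through the term $-t\,g_{ab}$, whereas $T_{ab}$ and $g_{ab}$ are themselves $t$-independent tensors.

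Next I would differentiate along the flow. Because $T_{ab}$ and $g_{ab}$ do not depend on $t$, and because $\mu_{2n}$ is constant in $t$ by Proposition \ref{lem:norm_xin_indep_t_misto}, the derivative collapses to
$$\frac{dQ_n}{dt}=T_{ab}\bigl(\xi^{(n+1)a}\xi^{(n)b}+\xi^{(n)a}\xi^{(n+1)b}\bigr)-\mu_{2n}.$$
The heart of the argument is then to show that the first term equals $\mu_{2n}$. For this I would insert the von Neumann propagation of the derivatives, $\xi^{(n+1)}=-i[H,\xi^{(n)}]$ (established within the proof of Proposition \ref{lem:norm_xin_indep_t_misto}), and pass from contracted indices to traces, writing $T_{ab}\xi^{(n+1)a}\xi^{(n)b}=\tr(\xi^{(n+1)}T\xi^{(n)})$ and likewise for the companion term.

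Substituting $\xi^{(n+1)}=-i(H\xi^{(n)}-\xi^{(n)}H)$ and expanding yields four trace monomials. Using cyclicity of the trace, two of them are cyclic rearrangements of one another carrying opposite signs and hence cancel, while the remaining two combine into $i\,\tr\bigl((\xi^{(n)})^2[H,T]\bigr)$. At this stage the assumption that $T$ is canonically conjugate to $H$, $i[H,T]=1$, reduces the commutator to a multiple of the identity and delivers exactly $\tr\bigl((\xi^{(n)})^2\bigr)=\mu_{2n}$. Consequently $dQ_n/dt=\mu_{2n}-\mu_{2n}=0$, so $Q_n=\kappa$ is constant along the trajectory; reading off $\kappa=T_{ab}\xi^{(n)a}\xi^{(n)b}-t\,\mu_{2n}$ reproduces Eq.(\ref{conseq_TconjugH_misto}).

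The principal obstacle I foresee is bookkeeping rather than conceptual: since the matrix algebra is non-commutative, the ordering of factors inside each trace must be tracked carefully both when converting $T_{ab}\xi^{(n+1)a}\xi^{(n)b}$ into $\tr(\xi^{(n+1)}T\xi^{(n)})$ and when performing the cyclic moves, so that precisely the right pair cancels and the surviving pair rebuilds $[H,T]$. A secondary point is to retain the explicit $t$-dependence of $\tilde{T}_{ab}$: it is exactly the $-\mu_{2n}$ produced by differentiating $t\,\mu_{2n}$ that is annihilated by the commutator contribution, and dropping it would falsely destroy the conservation. Everything else rests on the two facts already in hand, namely the von Neumann propagation of $\xi^{(n)}$ and the $t$-independence of $\mu_{2n}$.
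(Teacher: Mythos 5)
Your proposal is correct and follows essentially the same route as the paper: both differentiate the expectation of $T$ in the ``state'' $\xi^{(n)}$, insert the von Neumann propagation $\xi^{(n+1)}=-i[H,\xi^{(n)}]$, and use the canonical conjugation $i[H,T]=1$ together with the $t$-independence of $\mu_{2n}$ to identify the derivative with $\mu_{2n}$. The only cosmetic difference is that you differentiate $\tilde{T}_{ab}\xi^{(n)a}\xi^{(n)b}$ directly to get zero, whereas the paper computes $\frac{d}{dt}[T_{ab}\xi^{(n)a}\xi^{(n)b}]=\mu_{2n}$ and then integrates.
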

\begin{proof}[Proof]
If $T$ is canonically conjugated to $H$, thus  $ig_{ab}=(T_{ac}\tensor{H}{^c_b}-\tensor{H}{_a^c}T_{cb})$. It follows that
\begin{equation*}
g_{ab}\xi^{(n)a}\xi^{(n)b}=(-i\xi^{(n)a}T_{ac}\tensor{H}{^c_b}\xi^{(n)b}+i\xi^{(n)a}\tensor{H}{_a^c}T_{cb}\xi^{(n)b})
\end{equation*}
The derivative of the average of $T$ in the state $\xi^{(n)}$ gives
\begin{eqnarray*}
\frac{d}{dt}[T_{ab}\xi^{(n)a}\xi^{(n)b}]&=&T_{ab}(\dot{\xi}^{(n)a}\xi^{(n)b}+\xi^{(n)a}\dot{\xi}^{(n)b}) \\
&=&(-i\xi^{(n)a}T_{ac}\tensor{H}{^c_b}\xi^{(n)b}+i\xi^{(n)a}\tensor{H}{_a^c}T_{cb}\xi^{(n)b}) \\
&=&g_{ab}\xi^{(n)a}\xi^{(n)b}.
\end{eqnarray*}
Integrating the above expression, we obtain
\begin{eqnarray*}
T_{ab}\xi^{(n)a}\xi^{(n)b}&=&t\,g_{ab}\xi^{(n)a}\xi^{(n)b}+\kappa  \\
\tilde{T}_{ab}\xi^{(n)a}\xi^{(n)b}&=&\kappa,
\end{eqnarray*}
where $\kappa$ is a constant independent of $t$ and $\tilde{T}_{ab}\xi^{(n)a}\xi^{(n)b}$ is a constant of motion.
\end{proof}

\begin{prop}
\label{lem:2Txi(n)xi_misto}
Let $T_{ab}$ be canonically conjugate to $H_{ab}$ and an unbiased estimator to the parameter $t$. Thus, considering $n$ odd integers, with  $m=(n-1)/2$, we have
\begin{equation}
\label{2Txi(n)xi_misto}
(T_{ab}+T_{ba})\xi^{(n)a}\xi^b=(-1)^mng_{ab}\xi^{(m)a}\xi^{(m)b}
\end{equation}
\end{prop}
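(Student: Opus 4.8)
The plan is to package the symmetrized contraction into a single bilinear object and drive everything with one recursion that lowers the derivative order, the canonical conjugacy of $T$ to $H$ being what makes the recursion close. Concretely, for nonnegative integers $p,q$ I would set
\[
C_{p,q}\equiv (T_{ab}+T_{ba})\,\xi^{(p)a}\xi^{(q)b}=\tr\!\big(\{\xi^{(p)},\xi^{(q)}\}\,T\big),
\]
where $\{\,\cdot\,,\,\cdot\,\}$ denotes the anticommutator; the quantity to evaluate is $C_{n,0}$ with $n=2m+1$. The first, easy observation is the symmetry $C_{p,q}=C_{q,p}$, obtained by relabelling the dummy indices $a\leftrightarrow b$; this will serve as a boundary condition at the very end.

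The crux is the recursion. Using the von Neumann flow in the form $\xi^{(k+1)}=-i[H,\xi^{(k)}]$ (valid at every order, as in the proof of Proposition \ref{lem:norm_xin_indep_t_misto}), I would expand $C_{p+1,q}+C_{p,q+1}$. Because $[H,\,\cdot\,]$ is a derivation of the product, the four anticommutator terms collapse via $\{[H,\xi^{(p)}],\xi^{(q)}\}+\{\xi^{(p)},[H,\xi^{(q)}]\}=[H,\{\xi^{(p)},\xi^{(q)}\}]$, leaving $C_{p+1,q}+C_{p,q+1}=-i\tr([H,\{\xi^{(p)},\xi^{(q)}\}]\,T)$. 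Here the hypothesis enters: canonical conjugation means $[T,H]=i\,\mathbb 1$, so cyclicity of the trace gives $\tr([H,M]T)=\tr(M[T,H])=i\tr(M)$ for every $M$, and therefore
\[
C_{p+1,q}+C_{p,q+1}=2\,g_{ab}\,\xi^{(p)a}\xi^{(q)b}=:2D_{p,q}.
\]
I expect this to be the main obstacle — not conceptually deep, but the sign and index bookkeeping in the commutator/anticommutator manipulation must be handled with care, and it is precisely here that the conjugacy $i[H,T]=1$ is consumed.

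Next I would evaluate the mixed inner products $D_{p,q}$. Writing $\xi^{(k)}=(-i)^{k}\,\mbox{Ad}^{k}_{H}[\xi]$ via Eq.(\ref{vonneumann_nderiv}) and integrating by parts under the trace, $\tr([H,A]B)=-\tr(A[H,B])$, I can shift all commutators onto one factor, giving $D_{p,q}=(-1)^{p}(-i)^{p+q}\tr(\xi\,\mbox{Ad}^{p+q}_{H}[\xi])$. Comparing with the diagonal case $\mu_{2k}=D_{k,k}=\tr(\xi\,\mbox{Ad}^{2k}_{H}[\xi])$ shows $D_{p,q}=0$ when $p+q$ is odd and $D_{p,q}=(-1)^{\,p+(p+q)/2}\mu_{p+q}$ when $p+q$ is even; in particular $D_{j,2m-j}=(-1)^{m+j}\mu_{2m}$.

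Finally I would telescope the recursion from $C_{n,0}$: iterating $C_{n-k,k}=2D_{n-1-k,k}-C_{n-(k+1),k+1}$ marches toward the centre, and after $m$ steps reaches $C_{m+1,m}$. The recursion at $(m,m)$ together with the symmetry $C_{m+1,m}=C_{m,m+1}$ forces $2C_{m+1,m}=2D_{m,m}=2\mu_{2m}$, i.e. $C_{m+1,m}=\mu_{2m}$. Substituting $D_{n-1-i,i}=(-1)^{m+i}\mu_{2m}$, the alternating sum collapses (each sign squares to $+1$) to $2m(-1)^{m}\mu_{2m}$, and adding the central contribution $(-1)^{m}\mu_{2m}$ yields $C_{n,0}=(2m+1)(-1)^{m}\mu_{2m}=(-1)^{m}n\,g_{ab}\xi^{(m)a}\xi^{(m)b}$, as claimed. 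The oddness of $n$ is essential: it is what lets the telescoping terminate at the symmetric pair $C_{m+1,m}=C_{m,m+1}$, whose common value is pinned by the recursion, whereas for even $n$ the scheme stalls at the diagonal term $C_{n/2,\,n/2}$, which is not fixed by symmetry and, consistently, $C_{n,0}$ then fails to be constant in $t$.
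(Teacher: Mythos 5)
Your proof is correct, and it is considerably more than the paper provides: the paper's proof of this proposition is a one-line deferral to Lemma~6 of Brody--Hughston (the pure-state case), to be adapted ``given Proposition~\ref{prop:conseq_TconjugH_misto}'', whereas you supply the complete mixed-state argument. Your mechanism --- the two-index recursion $C_{p+1,q}+C_{p,q+1}=2D_{p,q}$, obtained from the derivation property $\{[H,A],B\}+\{A,[H,B]\}=[H,\{A,B\}]$ together with $\tr([H,M]T)=\tr(M[T,H])=i\tr(M)$, then telescoped from $C_{n,0}$ down to the symmetric middle term --- is precisely the natural anticommutator analogue of the referenced pure-state proof, so the route is the intended one; the only genuine difference is that you fix the middle term $C_{m+1,m}=\mu_{2m}$ from the recursion plus the symmetry $C_{m+1,m}=C_{m,m+1}$, where the paper would invoke Proposition~\ref{prop:conseq_TconjugH_misto} (differentiating $T_{ab}\xi^{(m)a}\xi^{(m)b}=t\,\mu_{2m}+\kappa$ once gives exactly $2C_{m+1,m}=2\mu_{2m}$, so the two are equivalent, and your version shows that neither unbiasedness nor Proposition~\ref{prop:conseq_TconjugH_misto} is strictly needed). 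All the sign bookkeeping checks out: $D_{j,2m-j}=(-1)^{m+j}\mu_{2m}$ makes each term of the alternating sum equal to $(-1)^m\mu_{2m}$, giving $(2m+1)(-1)^m\mu_{2m}$ as claimed; the endpoint cases agree with the paper's explicit computations, $C_{1,0}=i\tr([H,T]\xi\xi)=1$ for $n=1$ and $C_{3,0}=-3\mu_2$ for $n=3$ (the latter consistent with Proposition~1's factor $(\mu_4-3\mu_2^2)$, which indicates that the displayed $\mu_4^2/\mu_2$ in Eq.~(\ref{gradt_xihat3_misto}) is a typo for $\mu_4/\mu_2$). Your closing remark about even $n$ stalling at the unconstrained diagonal term $C_{n/2,n/2}$ is also a clean explanation of why only odd-order corrections are estimator-independent.
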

\begin{proof}[Proof]
Given Proposition \ref{prop:conseq_TconjugH_misto}, the proof follows the one presented in Lemma 6 of Ref.~\cite{brody1996royalsoc}.
\end{proof}

In view of these results, we can deduce some higher order corrections, independent of the specific choice of $T$, for canonically conjugated observables in the mixed state case.

\begin{proof}[Proof of Proposition 1]
Let us consider corrections up to the third order for the bound in estimation of mixed states that emerge when we expand $\triangledown_at$ over the orthogonal vector system $\dot{\xi}^a$, $\hat{\xi}^{(2)a}$ and $\hat{\xi}^{(3)a}$. From Eq.(\ref{bhattacharrya_misto}), the generalized bound is
\begin{equation}
\label{cota_classysvar_ordem3_misto}
\Delta T^2+\delta T^2\geq \frac{(\dot{\xi}^a\triangledown_at)^2}{2\dot{\xi}^b\dot{\xi}_b}+\frac{(\hat{\xi}^{(2)a}\triangledown_at)^2}{2\hat{\xi}^{(2)b}\hat{\xi}^{(2)}_b}+\frac{(\hat{\xi}^{(3)a}\triangledown_at)^2}{4\hat{\xi}^{(2)b}\hat{\xi}^{(2)}_b},
\end{equation}
where $\hat{\xi}^{(2)a}$ and $\hat{\xi}^{(3)a}$ are given by
\begin{equation*}
%\label{hatxi_2_misto}
\hat{\xi}^{(2)a}=\ddot{\xi}^a-\frac{(\ddot{\xi}^b\dot{\xi}_b)}{(\dot{\xi}^c\dot{\xi}_c)}\dot{\xi}^a-(\ddot{\xi}^b\xi_b)\xi^a
\end{equation*}
and
\begin{equation*}
%\label{hatxi_3_misto}
\hat{\xi}^{(3)a}=\dddot{\xi}^a-\frac{(\dddot{\xi}^b\ddot{\xi}_b)}{(\ddot{\xi}^c\ddot{\xi}_c)}\ddot{\xi}^a-\frac{(\dddot{\xi}^b\dot{\xi}_b)}{(\dot{\xi}^c\dot{\xi}_c)}\dot{\xi}^a-(\dddot{\xi}^b\xi_b)\xi^a
\end{equation*}

The term corresponding to $r=1$ in Eq.(\ref{cota_classysvar_ordem3_misto}) has already been calculated in the derivation of Eq.(\ref{qntcrlb_misto2}). Thus let us proceed with the second order term. The expression in Eq.(\ref{vonneumann_nderiv}) and Prop.~\ref{lem:norm_xin_indep_t_misto} give us that $\ddot{\xi}=-[\tilde{H},[\tilde{H},\xi]]$, $\ddot{\xi}^a\dot{\xi}_a=0$, $\ddot{\xi}^a\xi_a=-2I_{\rho}(\tilde{H})$. Therefore, $\hat{\xi}^{(2)a}=2I_{\rho}(\tilde{H})\xi-[\tilde{H},[\tilde{H},\xi]]$. Hence, the numerator of the second order term is $(\hat{\xi}^{(2)a}\triangledown_at)^2=\big\{\tr \big[(T\xi-\xi T)(2I_{\rho}(\tilde{H})\xi-[\tilde{H},[\tilde{H},\xi]])\big]\big\}^2$ which explicitly depends on the choice of the estimator $T$. Due to this dependence, we will discard this term, as we want only correction terms that do not depend on the choice of the estimator.

Now, dealing with the term that involves $\hat{\xi}^{(3)a}$, we have $\dddot{\xi}^a\ddot{\xi}_a=\ddot{\xi}^a\dot{\xi}_a=0$ from Prop.~\ref{lem:norm_xin_indep_t_misto}. It follows that $\ddot{\xi}^a\xi_a=-\dot{\xi}^a\dot{\xi}_a$, thus $\dddot{\xi}^a\xi_a=0$. From the expression in Eq.(\ref{vonneumann_nderiv}), $\dddot{\xi}^a=i[\tilde{H},[\tilde{H},[\tilde{H},\xi]]]=i([\tilde{H}^3,\xi]+3[\tilde{H}\xi\tilde{H},\tilde{H}])$. After some manipulation, we find $\dddot{\xi}^a\dot{\xi}_a=-2\tr (\tilde{H}^4\xi\xi-4\tilde{H}^3\xi\tilde{H}\xi+3\tilde{H}^2\xi\tilde{H}^2\xi)$. Consequently,
\begin{equation*}
\hat{\xi}^{(3)}=i\left\{ \big([\tilde{H}^3,\xi]+3[\tilde{H}\xi\tilde{H},\tilde{H}]\big) - \frac{\mu_4}{\mu_2}[\tilde{H},\xi]  \right\}
\end{equation*}
and
\begin{equation}
\label{norm_xihat3_misto}
\hat{\xi}^{(3)a}\hat{\xi}^{(3)}_a=\mu_6-\frac{\mu_4^2}{\mu_2},
\end{equation}
where we explicitly have
\begin{eqnarray*}
\mu_2&: =&2I_{\rho}(\tilde{H})=2\tr (\tilde{H}^2\xi\xi-\tilde{H}\xi\tilde{H}\xi)=2(\Delta H^2 + \delta H^2)=\tr(\dot{\xi}\dot{\xi}) \\
\mu_4&: =& 2\tr (\tilde{H}^4\xi\xi-4\tilde{H}^3\xi\tilde{H}\xi+3\tilde{H}^2\xi\tilde{H}^2\xi) =\tr(\ddot{\xi}\ddot{\xi})\\
\mu_6&: =& 2\tr (\tilde{H}^6\xi\xi-6\tilde{H}^5\xi\tilde{H}\xi+15\tilde{H}^4\xi\tilde{H}^2\xi-10\tilde{H}^3\xi\tilde{H}^3\xi)=\tr(\xi^{(3)}\xi^{(3)}).
\end{eqnarray*}

Regarding the numerator of the third order term
\begin{equation*}
\hat{\xi}^{(3)a}\triangledown_at=i\tr \left\{ [T,\tilde{H}^3]\xi\xi+3(\xi[\tilde{H}^2,T]\xi\tilde{H}+\xi[T,\tilde{H}]\xi\tilde{H}^2) -\xi[T,\tilde{H}]\xi\frac{\mu_4}{\mu_2} \right\},
\end{equation*}
we see that it involves commutators between $T$ and $H^k,k\in\N$. By finite induction one shows that $[\tilde{H}^k,T]=-ki\tilde{H}^{k-1}, k\in\N$, and we have
\begin{equation}
\label{gradt_xihat3_misto}
\hat{\xi}^{(3)a}\triangledown_at=\frac{\mu_4^2}{\mu_2}-3\mu_2.
\end{equation}
Due to Prop.~\ref{lem:2Txi(n)xi_misto}, which hypothesizes the canonical conjugation between $T$ and $H$, in general, even terms explicitly depend on the arbitrary choice of the estimator $T$ and odd terms involve commutators between $T$ and $H ^ k, k \in \N$; therefore the odd terms will not depend on the choice of $T$. Thus, putting the results in Eqs.(\ref{qntcrlb_misto2}), (\ref{norm_xihat3_misto}), and (\ref{gradt_xihat3_misto}) together, we obtain the following Heisenberg-like correction for the Cramér-Rao bound in the mixed state case, which only depends on the dynamics of $\xi(t)$ generated by $H$, i.e., the statement of Proposition 1.

%\begin{prop}
%\label{prop:qntcrlb_classys_correcao_misto}
%Given the state $\rho(t)$ under von Neumann dynamics, if $T$ and $H$ are canonically conjugated, then the following bound applies
%\begin{equation}
%\label{qntcrlb_classys_correcao_misto}
%(\Delta T^2 + \delta T^2)(\Delta H^2 - \delta H^2) \geq \frac{1}{4}\left[1+ \frac{(\mu_4-3\,\mu_2^2)^2}{\mu_6\,\mu_2-\mu_4^2} \right].
%\end{equation}
%\end{prop}

Imposing $\xi=\xi^2$ to recover the pure state case, we have $\mu_4 \neq \langle\tilde{H}^4\rangle$, where $\langle\tilde{H}^n\rangle=\tr(\tilde{H}^n\xi\xi)$ denotes the $n$-th moment of the Hamiltonian in the corresponding state $\xi$. The only circumstance where the equality holds is when $\mu_2=\langle\tilde{H}^2\rangle$. Thus, just imposing that the density operator characterizes a pure state Eq.(\ref{qntcrlb_classys_correcao_misto}) does not the recover the results in Refs.~\cite{brody1996prl, brody1996royalsoc}, contrary to what was expected \cite{brody2011fasttrack}.
\end{proof}

\section{Method to obtain higher-orders corrections}
\label{sec:algorit_ordem_impar}

In the previous section, we found that \textit{even} higher order corrections explicitly depend on the choice of the estimator $T$, while by Prop.~\ref{lem:2Txi(n)xi_misto} \textit{odd} higher order corrections are independent of the specific choice of $T$. We will then present an algorithmic way of calculating odd terms of higher order corrections following the steps seen in Ref.~\cite{algorit_ordem_impar} for a one-parameter family of states $\xi(t)$ evolving under von Neumann dynamics.

Let the series of orthogonal vectors be
\begin{equation*}
\left\{ \xi^a, \dot{\xi}^a, \ddot{\xi}^a - (\ddot{\xi}^b\xi_b)\xi^a, \dddot{\xi}^a-\frac{\dddot{\xi}^b\dot{\xi}_b}{\dot{\xi}^c\dot{\xi}_c}\dot{\xi}^a, \dots \right\},
\end{equation*}
that are basically the $\hat{\xi}^{(n)a}$, given in Prop.~\ref{prop:bhattacharrya_misto}, dismissing the vanishing terms (e.g. $\dddot{\xi}^a\ddot{\xi}_a = \ddot{\xi}^a\dot{\xi}_a = \ddot{\xi}^a\xi_a=0$). Let us denote this series of vectors by $\{\uppsi^a_n\}$; thus $\uppsi^a_0=\xi^a$, $\uppsi^a_1=\dot{\xi}^a$ and so on.

For each \textit{odd integer value\/} $n$ the basis vectors $\uppsi^a_n$ are obtained by subtracting the components $\uppsi^a_k$ of $\xi^{(n)a}$ with $k<n$
\begin{eqnarray*}
\uppsi^a_1 &=& \dot{\xi}^a \\
\uppsi^a_3 &=&\xi^{(3)a}-\frac{\xi^{(3)b}\uppsi_{1b}}{\uppsi^c_1\uppsi_{1c}}\uppsi^a_1 \\
\uppsi^a_5 &=& \xi^{(5)a}-\frac{\xi^{(5)b}\uppsi_{3b}}{\uppsi^c_3\uppsi_{3c}}\uppsi^a_3 - \frac{\xi^{(5)b}\uppsi_{1b}}{\uppsi^c_1\uppsi_{1c}}\uppsi^a_1.
\end{eqnarray*}
Note that $\triangledown_at=(\tilde{T}_{ab}+\tilde{T}_{ba})\xi^b$, so we can rewrite the generalized bounds in Eq.(\ref{bhattacharrya_misto}) as
\begin{equation}
\label{bhattacharrya_misto_uppsi}
\Delta T^2 + \delta T^2 \geq \frac{1}{2}\sum_n\frac{[\uppsi^a_n(\tilde{T}_{ab}+\tilde{T}_{ba})\xi^b]^2}{g_{cd}\uppsi^c_n\uppsi^d_n}.
\end{equation}
%Além disso vamos definir que a norma de $\xi^{(n)a}$ será denotada por
%\begin{equation*}
%g_{ab}\xi^{(n)a}\xi^{(n)b}\equiv \mu_{2n} = \big( mod_{(-i)}[n]\text{Ad}%^n_{(\tilde{H})}[\xi] \big)^2 .
%\end{equation*}
%Note que $\mu_{2n}=\braket{\tilde{H}^{2n}}_{mix}$ como visto na seção anterior.
Let $N_n \equiv g_{ab}\uppsi^a_n\uppsi^b_n$ stand for the denominator of the correction terms in Eq.(\ref{bhattacharrya_misto_uppsi}). We have
\begin{equation*}
N_n=\frac{D_{2n}}{D_{2n-4}} , \quad n>2,
\end{equation*}
where $D_{2n}$ is defined by the determinant
\begin{equation*}
D_{2n} =
%\begin{vmatrix}
\left| \matrix{ \mu_{2n}& \mu_{2n-2}& \cdots& \mu_{n+1} \cr
\mu_{2n-2}& \mu_{2n-4}& \cdots& \mu_{n-1} \cr
\vdots         & \vdots           & \ddots & \vdots \cr
\mu_{n+1}  & \mu_{n-1}     & \cdots & \mu_{2} \cr} \right|.
%\end{vmatrix}.
\end{equation*}
As examples we found that $D_{2}=\mu_2$, $D_6=\mu_6\mu_2-\mu_4^2$. Note that $N_1=\mu_2=2I_{\rho}(\tilde{H})$. The statistical identities \cite{stuart_kendall} guarantee that $D_{2n}\ge 0$.

Before moving on to the numerator, we define
\begin{equation*}
F_{n,k}\equiv \frac{\xi^{(n)a}\uppsi_{ka}}{\uppsi^b_k\uppsi_{kb}},
\end{equation*}
which has an expression in terms of determinants given by
\begin{equation*}
F_{n,k}=\frac{(-1)^{\frac{1}{2}(n+k)-1}}{D_{2k}}
%\begin{vmatrix}
\left| \matrix{\mu_{n+k} & \mu_{n+k-2} & \dots & \mu_{n+1}\cr
\mu_{2k-2} & \mu_{2k-4} & \dots & \mu_{k-1}\cr
\vdots & & \ddots & \vdots \cr
\mu_{k+1} & \mu_{k-1} & \dots & \mu_{2} \cr} \right| .
%\end{vmatrix}.
\end{equation*}
For example, we have for $k=1,3$
\begin{eqnarray*}
F_{n,3} &=& (-1)^{m+1}\frac{1}{D_6}
%\begin{vmatrix}
\left| \matrix{\mu_{n+3} & \mu_{n+1} \cr
\mu_4 & \mu_2 \cr} \right| \\
%\end{vmatrix} \\
F_{n,1} &=& (-1)^m\frac{1}{D_2}\mu_{n+1},
\end{eqnarray*}
where $m=1/2(n-1)$.

We now have all the identities necessary to find a recursive relationship to obtain the odd higher order corrections. Going back to the numerator, we define $U_n\equiv \uppsi^a_n(\tilde{T}_{ab}+\tilde{T}_{ba})\xi^b$. Using $F_{n,k}$ follows the expression for $\uppsi^a_n$
\begin{equation*}
\uppsi^a_n=\xi^{(n)a}-\sum\limits^{n-2}_{k=1,3,5,\dots}F_{n,k}\uppsi^a_k;
\end{equation*}
from the relation above and Prop.~\ref{lem:2Txi(n)xi_misto} follows a recursive formula for $U_n$
\begin{equation}
\label{recursiva_Un}
U_n=(-1)^m\,n\,\mu_{n-1}-\sum\limits^{n-2}_{k=1,3,5,\dots}F_{n,k}U_k,
\end{equation}
where $U_1=1$. Thus, the uncertainty relation in Eq.(\ref{bhattacharrya_misto_uppsi}) can be rewritten as
\begin{equation}
\label{correcao_superior_impares}
(\Delta T^2+\delta T^2)(\Delta H^2-\delta H^2)\ge \frac{1}{4}\sum\limits^{n-2}_{k=1,3,5,\dots}\frac{\mu_2U^2_k}{N_k}.
\end{equation}

Using the relations given in Eqs.(\ref{recursiva_Un}) and (\ref{correcao_superior_impares}), after some algebraic manipulations, we can obtain any correction of a higher order purely in terms of $\mu_{2k}$, regardless of the choice of estimator.

\section{Conclusion}
\label{sec:conclusao}

In this work we approached the quantum statistical estimation problem using mixed states, illustrated by time-energy uncertainty relations, from a geometric perspective. This geometrical view of the space of states allows us to make a clear distinction from the pure state scenario previously reported in literature. We obtained the Cram\'{e}r-Rao bound independent of the choice  of the estimator and analyzed the corrections that emerge naturally. A methodology to obtain higher-order corrections is also presented and, consequently, the path for extensions of the main result is proposed. Contrarily to what was expected, when we imposed $\rho = \rho^2$, there was no reduction to the known bound found for the pure state case.%, which was unexpected.

It is important to note that the square root embedding used in the present work, $\rho\rightarrow \sqrt{\rho}$, is related to the WYSI metric which, in turn, has its Riemannian connection identified with the single $\alpha$-connection of the same type. This fact makes the WYSI a good choice of metric to work over a geometric structure of the state space, since it presents this privileged Riemannian structure from the point of view of $\alpha$-connections. The choice of this embedding was also motivated by the possibility of recovering the $1/4$ factor in the Cram\'{e}r-Rao bound.

To conclude our discussions in the context of single parameter estimation, it is important to remember that the saturation of the Cram\'{e}r-Rao bound can only be achieved when considering two important features: $(i)$ the asymptotic limit of a large number of probes and $(ii)$ performing an optimal measurement given by the eigenbasis of the symmetric logarithmic derivative. Usually in a laboratory, where the experimentalist has access to a limited number of probes, corrections to the bound gain importance to provide tighter estimates to the attainable estimation precision. Here we investigated these corrections approaching the problem from an information-geometric point of view and we obtained that in the context of mixed state quantum estimation, the Wigner-Yanase skew information constitutes a natural metric in the space of states. Next, higher-order corrections obtained to the Cram\'{e}r-Rao inequality based on such metric determine the tightness of the bound for practical purposes. Our work provides advances towards understanding the mixed state estimation paradigm and its practical realization in quantum sensing and metrology.

\section{Note added}
During the preparation of this manuscript, we became aware of related work by A.~J.~Belfield and D.~C.~Brody \cite{brody2020} where higher-order corrections to quantum estimation bounds based on the Wigner-Yanase skew information metric are also discussed.

\ack We thank Dorje C. Brody for careful reading of the manuscript and for fruitful discussions. The project was funded by Brazilian funding agencies CNPq (Grant No. 307028/2019-4), FAPESP (Grant No. 2017/03727-0), Coordena\c{c}\~{a}o de Aperfei\c{c}oamento de Pessoal de N\'{i}vel Superior - Brasil (CAPES) (Finance Code 001), by the Brazilian National Institute of Science and Technology of Quantum Information (INCT/IQ), and by the European Research Council (ERC StG GQCOP Grant No.~637352).

% -----------------------------------------------------------
% Referências bibliográficas
% ----------------------------------------------------------

\section*{References}

\end{document}